\def\beq{\begin{eqnarray}}
\def\eeq{\end{eqnarray}}
\def\bea{\begin{eqnarray}}
\def\eea{\end{eqnarray}}
\newcommand{\R}{{\rm I\!R}}
\newcommand{\tr}{\operatorname{tr}}
\newtheorem{theorem}{Theorem}
\newtheorem{lemma}{Lemma}
\newenvironment{proof}[1][Proof]{\textbf{#1.} }{\ \rule{0.5em}{0.5em}}
\begin{document}

\preprint{DCP-12-05}

\title{Symmetric texture-zero mass matrices with eigenvalues quark mass}
\author{A. Criollo}
\email{arturoc@uaeh.edu.mx}
\affiliation{\'Area Acad\'emica de Matem\'aticas y F\'{\i}sica, Universidad Aut\'onoma del Estado
de Hidalgo, Carr. Pachuca-Tulancingo Km. 4.5, C.P. 42184, Pachuca, Hgo.}

\author{R. Noriega-Papaqui}
\email{rnoriega@uaeh.edu.mx}
\affiliation{\'Area Acad\'emica de Matem\'aticas y F\'{\i}sica, Universidad Aut\'onoma del Estado
de Hidalgo, Carr. Pachuca-Tulancingo Km. 4.5, C.P. 42184, Pachuca, Hgo.}
\affiliation{Dual C-P Institute of High Energy Physics}

%%% VIVA VILLA CABRONES!!!!!!!!!!!!!

\begin{abstract}
\noindent Working within the context of texture-zeros mechanism for fermionic mass
matrices, we provide necessary and sufficient conditions on the characteristic
polynomial coefficients such that it has real, simple and positive roots. We
translate these conditions in terms of invariants from congruent matrices. Then,
all symmetric texture-zero matrices are counted and classified. Next
we apply  the result from the first part to analyze the three, two and one
zero texture matrices in a systematic way. Finally we solve analytically
the $V_{ckm}$ mixing matrix for the four zero sets; we also analyze the
$V_{ckm}$ for a particular case of four zero, four zero-perturbed and three zero sets.

\end{abstract}

\pacs{12.5.Ff; 02.10.Ud}

\maketitle
%\newpage
\section{Introduction}
In the Standard Model (SM) with $SU(2) \times U(1)$ as the gauge group of
electroweak interactions, the masses of quarks and charged leptons are contained
in the Yukawa Sector. After Spontaneous Symmetry Breaking (SSB), the mass
matrix is defined as:
$$
M_f = \frac{v}{\sqrt{2}} Y_f, \qquad
(f = u, d, l),
$$
where $v$ is the vacuum expectation value of the Higgs field and $Y_f$ are
the $3 \times 3$ Yukawa matrices.
The physical masses of the particles are defined as the eigenvalues
of the mass matrix $M_f$. Within the
SM context the mass matrix is unknown, the only trail of the quarks
mass matrices is the $V_{ckm}$ matrix, which is built by the product of left matrices
that diagonalize the $u$ and $d$-quark mass matrix.

In 1977 Harald Fritzsch proposed a phenomenological study
\cite{Fritzsch:1977za}, the so called
texture-zeros mechanism \footnote{For excellent reviews see \cite{Fritzsch:1999ee},
and references there in.}, that consist of looking for the simplest pattern
of mass matrices, which can result in a self-consistent way
and it reproduce the $V_{ckm}$ parameters obtained experimentally.
From all possible texture-zero matrices (symmetric, non-symmetric and
triangular\cite{Haussling:1997ue} matrices) we restrict our study to symmetric textures.
Mathematically speaking, a symmetric mass matrix always guarantees that
the physical masses are real, however, the positivity condition for the eigenvalues
is not fulfilled by any symmetric matrix, moreover a positive
definite matrix has real and positive eigenvalues, but not necessarily they are different.
In the texture-zeros formalism it is possible to have negative
eigenvalues, in this case, these negative signs can be removed with a rotation, however
in our proposal, in this paper  we will not consider this extra rotation,  we take as
starting point  strictly that all eigenvalues must  be the quark masses, in
other words, each eigenvalue must be real, different and positive.
Going in this direction, we discuss what kind of symmetric texture-zeros are self consistent
considering by definition that, the eigenvalues of the mass matrix are the masses of
quarks or charged leptons, and they must be positive (and  different)
real numbers.

The organization of this paper goes as follows. In Sec. II, we show analytically
how the mass matrices
appears in the SM context. In Sec. III, we find necessary and sufficient
conditions on the characteristic polynomial coefficients such that its roots
are real, simple and positive quantities. These conditions are rewritten in
terms of the invariants of the congruent matrices, \textit{i.e.} trace, determinant
and trace of the power matrix. In Sec. IV, we develop a simple notation that
counts and classifies the texture-zero matrices, and we show that all symmetric
matrices of $3\times 3$ can be grouped into 1-zero, 2-zero and 3-zero texture, in order to complete
the counting  the matrix without zeros is included. In Sec. V, we apply
systematically  the results of Sec. III to all matrices of the Sec. IV, and we
show what kind of texture matrices have real, different and positive eigenvalues. Finally
in Sec. VI, we derive analytical expressions for all the $V_{ckm}$ elements arising from
the 4-zero sets, then by choosing a particular case of a four zero set, we compute
the $V_{ckm}$ matrix, next we perturb this case in order  to improve the expressions for the $V_{ckm}$
 elements, finally we took this case to the three zero sets.

\section{Preliminaries}
%%%%%%%%%%%%%%%%%%%%--PRELIMINARIES--%%%%%%%%%%%%%%%%%%%%%%%%%%%%%%%%%%%%%%%%%%%%
%%%%%%%%%%%%%%%%%%%%%%%%%%%%%%%%%%%%%%%%%%%%%%%%%%%%%%%%%%%%%%%%%%%%%%%%%%%%%%%%%
In the Yukawa sector of the SM, the mass terms for quarks and charged leptons
can be expressed as
\begin{equation}
\label{LagMass}
\bar{u}_L M_u u_R  + \bar{d}_L M_d d_R  + \bar{l}_L M_l l_R,
\end{equation}
where $u_{L(R)}$, $d_{L(R)}$ and $l_{L(R)}$ are the left(right)-handed quark and charged leptons fields for the  u-sector $(u,c,t)$, d-sector $(d,s,b)$ and charged leptons $(e,\mu,\tau)$ respectively. $M_u$, $M_d$ and $M_l$ are the mass matrices. Expressing the above equation in terms of the physical fields, one diagonalize the mass matrices by bi-unitary transformations
\begin{eqnarray}
\label{Ec3}
\bar{M}_u &=& U^{\dagger}_{uL} \, M_u \, U_{uR} = Diag \left[ m_u, m_c, m_t \right], \nonumber \\
\bar{M}_d &=& U^{\dagger}_{dL} \, M_d \, U_{dR} = Diag \left[ m_d, m_s, m_b \right],   \\
\bar{M}_l &=& U^{\dagger}_{lL} \, M_l \, U_{lR} = Diag \left[ m_e, m_{\mu}, m_{\tau} \right], \nonumber
\end{eqnarray}
where $U_{f_L}$ and $U_{f_R}$ ($f = u, d, l$) are in general complex unitary matrices. The quantities $m_u, m_d, \dots$ etc. denote the eigenvalues of the mass matrices, i.e. the physical quark masses and they must have real and nonnegative quantities.

Re-expressing Eq. (\ref{LagMass}) in terms of physical fermion fields $(f'_{L (R)})$ as
\begin{equation}
\label{Lag-mass}
\bar{u'}_L \bar{M}_u u'_R  + \bar{d'}_L \bar{M}_d d'_R  + \bar{l'}_L \bar{M}_l l'_R,
\end{equation}
where $\bar{f'}_L\,= \,\bar{f}_{L} \,U_{fL}$ and $f'_R \,= \,U_{fR}^{\dagger} \, f_{R}$, ($f'= u', d', l'$).

\noindent Eq.(\ref{Ec3}) implies that $\bar{M}_f$ and $M_f$, $(f= u,\, d,\,l)$ are
congruent matrices, the relation of congruence is an equivalence relation, which implies a
space partition into cosets. Any two elements that belong at the same coset have the following
invariants: determinant, trace, trace of the power matrix, characteristic polynomial and their
eigenvalues, on the other hand,
if $\bar{M}_f$ and $M_f$ are congruent matrices then: $\det\bar{M}_f = \det M_f, \ \tr\bar{M}_f = \tr M_f, \ \tr\bar{M}_f^n = \tr M_f^n$,
 where $n$ is a positive integer,  $\det (\bar{M}_f -\lambda I) = \det (M_f-\lambda I)$
\footnote{In this work, we will denote the product $(\tr A)(\tr A)$
as $\tr^2A$. In the general case $(\tr A)^n = \tr^nA$ for $n$ positive integer.}\cite{FriedbergLA}.

\noindent Considering  $M_f$ as a $3\times 3$ symmetric matrix with real coefficients
then $\bar{M}_f$ is built as a diagonal matrix where its elements
are the eigenvalues of $M_f$, these eigenvalues are found as the roots its
characteristic polynomial. In the following section we give
conditions on the coefficients of the characteristic polynomial from $M_f$,
\textit{i.e} on the $M_f$ elements, such that this polynomial has three real,
positive and simple roots.

\section{Main Theorem}
%%%%%%%%%%%%%%%%%%%%--MAIN THEOREM--%%%%%%%%%%%%%%%%%%%%%%%%%%%%%%%%%%%%%%
%%%%%%%%%%%%%%%%%%%%%%%%%%%%%%%%%%%%%%%%%%%%%%%%%%%%%%%%%%%%%%%%%%%%%%%%%%%%%%%%%
The physical quark masses are defined as the eigenvalues of the mass matrix, from mathematical point of view, to obtain the quark masses it is necessary compute the characteristic equation and its roots are the quark masses. In this section we present the conditions over the characteristic polynomial coefficients such that the polynomial characteristics roots are real, positive and different. We translate these conditions in terms of invariants of congruent matrices as Trace and Determinant of the mass matrix.

\begin{theorem}
\label{theorem_1}
The polynomial of degree 3, $p(\lambda)=\lambda^3+a_2\lambda^2+a_1\lambda +a_0$ has three
different, real and positive roots if and only if the following conditions over its coefficients $a_0$,  $a_1$ $a_2$ hold.

\begin{enumerate}
\item $a_0,a_2<0<a_1.$\label{condition11}
\item $3a_1<a_2^2$.\label{condition12}
\item If $\lambda_4=\dfrac{-a_2+\sqrt{a_2^2-3a_1}}{3}$ and
$\lambda_5=\dfrac{-a_2-\sqrt{a_2^2-3a_1}}{3}$, then $p(\lambda_4)<0$ and $p(\lambda_5)>0$.\label{condition13}
\end{enumerate}
\end{theorem}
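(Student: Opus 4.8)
The plan is to study $p$ through elementary calculus, the crucial observation being that $\lambda_4$ and $\lambda_5$ are precisely the critical points of $p$. Indeed, $p'(\lambda)=3\lambda^2+2a_2\lambda+a_1$ vanishes exactly at $\frac{-a_2\pm\sqrt{a_2^2-3a_1}}{3}$, so $\lambda_5$ is the smaller critical point and $\lambda_4$ the larger one. Since $p'$ is an upward-opening parabola, $p$ increases on $(-\infty,\lambda_5)$, decreases on $(\lambda_5,\lambda_4)$, and increases on $(\lambda_4,+\infty)$, making $\lambda_5$ a local maximum and $\lambda_4$ a local minimum. This picture drives both directions of the equivalence.

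For necessity, I would assume $p$ has three distinct positive real roots $r_1,r_2,r_3$ and read off item~\ref{condition11} directly from Vieta's formulas: $r_1+r_2+r_3=-a_2>0$, $r_1r_2+r_1r_3+r_2r_3=a_1>0$, and $r_1r_2r_3=-a_0>0$. By Rolle's theorem $p'$ has a root strictly between each consecutive pair of roots, hence two distinct real critical points, which forces $a_2^2-3a_1>0$ (item~\ref{condition12}). Finally, because the graph crosses the axis three times the local maximum value must be strictly positive and the local minimum value strictly negative, i.e. $p(\lambda_5)>0$ and $p(\lambda_4)<0$ (item~\ref{condition13}).

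For sufficiency I would reverse the reasoning. Item~\ref{condition12} guarantees that $\lambda_5<\lambda_4$ are genuine distinct critical points. Using the limits $p(\lambda)\to-\infty$ as $\lambda\to-\infty$ and $p(\lambda)\to+\infty$ as $\lambda\to+\infty$, together with $p(\lambda_5)>0$, $p(\lambda_4)<0$ and the monotonicity pattern above, the intermediate value theorem yields exactly one root in each of $(-\infty,\lambda_5)$, $(\lambda_5,\lambda_4)$ and $(\lambda_4,+\infty)$; the strictness of the inequalities in item~\ref{condition13} keeps the roots away from the critical points, so all three are real, simple and distinct. It then remains to upgrade reality to positivity, for which I would invoke Descartes' rule of signs: item~\ref{condition11} makes the sign sequence of $p(\lambda)$ equal to $+,-,+,-$, while the sign sequence of $p(-\lambda)=-\lambda^3+a_2\lambda^2-a_1\lambda+a_0$ is $-,-,-,-$, exhibiting no sign changes and hence no negative roots; since $p(0)=a_0<0$ zero is excluded too, so all three real roots are positive.

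The step I expect to be the main obstacle is this last upgrade from reality to positivity: items~\ref{condition12} and~\ref{condition13} only guarantee three distinct real roots, and the sign conditions of item~\ref{condition11} must be shown to genuinely rule out the one-positive-two-negative configuration, which also has positive product. Descartes' rule dispatches this cleanly; a purely Vieta-based alternative would note that two negative roots with sum $-s$ (so $s>0$) and product $q>0$ satisfy $q\le s^2/4$ by reality, yet positivity of the total sum and of the pairwise-sum coefficient $a_1$ would force $q>s^2$, a contradiction.
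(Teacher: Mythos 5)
Your proof is correct, and its skeleton coincides with the paper's: necessity via Vieta's formulas, Rolle's theorem, and the sign of the local extrema; sufficiency via the observation that $\lambda_5<\lambda_4$ are the local maximum and minimum and repeated use of the intermediate value theorem. The one place you genuinely diverge is the final upgrade from ``three distinct real roots'' to ``three positive roots.'' The paper proves a small lemma on quadratics ($\lambda^2+a_1\lambda+a_0$ has two simple positive roots iff $a_1<0<a_0<a_1^2/4$) and applies it to $p'$, so that conditions 1 and 2 already force $0<\lambda_5<\lambda_4$; combined with $p(0)=a_0<0<p(\lambda_5)$ this traps the smallest root in $(0,\lambda_5)$ and positivity of all three roots is automatic. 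You instead locate the roots in $(-\infty,\lambda_5)$, $(\lambda_5,\lambda_4)$, $(\lambda_4,\infty)$ without using positivity of the critical points, and then kill off negative roots with Descartes' rule of signs (and, as a backup, a clean Vieta/AM--GM contradiction). Both are valid; the paper's route makes the role of condition 1 in positioning the critical points explicit and keeps everything inside the IVT picture, while yours isolates the positivity question as a separate, self-contained sign argument that does not need the auxiliary lemma. Either way the logic is complete.
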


\begin{proof} See Appendix.
\end{proof}

We observe that in the condition \ref{condition13}, $\lambda_4$ and $\lambda_5$ are the roots of the first derivative of $p(\lambda)$, and therefore the condition \ref{condition12} implies that $\lambda_4$ and $\lambda_5$ are real numbers, in others words, $p(\lambda)$ has two critical points, this fact join to the condition 1 implies that $0<\lambda_5<\lambda_4$.

The condition \ref{condition13} ($p(\lambda_4)<0$ and $p(\lambda_5)>0$), means that the maximum value is positive and the minimum value is negative, and therefore $p(\lambda)$ has three real and different roots. This condition can be replaced by
\begin{equation}
-2(a_2^2-3a_1)^{3/2} <2a_2^3-9a_1a_2+27a_0<2(a_2^2-3a_1)^{3/2},\label{condicion3}
\end{equation}
the first inequality is obtaining by solving $p(\lambda_5)>0$ and the second one is obtained by solving $p(\lambda_4)<0$.
The condition (\ref{condicion3}) can be rewriting as
\begin{equation}
|2a_2^3-9a_1a_2+27a_0|<2(a_2^2-3a_1)^{3/2}.\label{condicion3_1}
\end{equation}

\noindent It is convenient to rewrite the theorem \ref{theorem_1}  in terms of the invariants of congruent matrices. This create directly a link between the matrix
elements and its eigenvalues which facilitates subsequent computations and
applications. To implement this fact, first we write the coefficients of its
characteristic polynomial $p(\lambda)$ in terms of its trace $(\tr M)$, trace
of the square matrix $(\tr M^2)$ and its determinant $(\det M)$ in the following form:

\begin{equation}
\label{polcar}
p(\lambda)=\lambda^3-\tr M \lambda^2+\dfrac{1}{2}\left[ \tr^2 M-\tr M^2 \right]\lambda -\det M.
\end{equation}

Now we are ready to present the main theorem of this section

\begin{theorem}
\label{T1}
A real, symmetric matrix $M$ has real, positive and different eigenvalues if and only if the following three conditions hold.
\begin{enumerate}
\item \begin{enumerate}
        \item $\det M>0$, \label{condition1a}
        \item $\tr M>0$, \label{condition1b}
        \item $\tr M^2<\tr^2M$. \label{condition1c}
        \end{enumerate}\label{condition1}
\item $\tr^2M<3\tr M^2$. \label{condition2}
\item $\left| \tr M (5\tr^2M-9\tr M^2)-54\det M \right| < \sqrt{2}(3\tr M^2-\tr^2M)^{3/2}$. \label{condition3}
\end{enumerate}
\end{theorem}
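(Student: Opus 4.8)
The plan is to derive Theorem \ref{T1} directly from Theorem \ref{theorem_1} by specializing the latter to the characteristic polynomial of $M$ and re-expressing its coefficients through the matrix invariants. Because $M$ is real and symmetric, the spectral theorem guarantees that all three eigenvalues are automatically real; hence the statement ``$M$ has real, positive and different eigenvalues'' is equivalent to ``the characteristic polynomial of $M$ has three real, positive and simple roots.'' That polynomial is monic of degree three, so Theorem \ref{theorem_1} applies verbatim and already supplies a necessary and sufficient criterion. All that remains is a translation of its three conditions from the coefficients $a_0,a_1,a_2$ into $\tr M$, $\tr M^2$ and $\det M$.

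First I would read off the coefficient dictionary from Eq.(\ref{polcar}), namely
\[
a_2=-\tr M,\qquad a_1=\tfrac{1}{2}\left(\tr^2 M-\tr M^2\right),\qquad a_0=-\det M,
\]
and substitute these into each condition of Theorem \ref{theorem_1}. Condition \ref{condition11} ($a_0,a_2<0<a_1$) splits cleanly into the three sign requirements (\ref{condition1a})--(\ref{condition1c}): $a_0<0$ gives $\det M>0$, $a_2<0$ gives $\tr M>0$, and $a_1>0$ gives $\tr M^2<\tr^2 M$. Condition \ref{condition12} ($3a_1<a_2^2$) becomes $\tfrac{3}{2}(\tr^2 M-\tr M^2)<\tr^2 M$, which after rearranging is precisely condition \ref{condition2}, $\tr^2 M<3\tr M^2$.

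For the third condition I would use the equivalent closed form (\ref{condicion3_1}), $|2a_2^3-9a_1a_2+27a_0|<2(a_2^2-3a_1)^{3/2}$, rather than the critical-value conditions \ref{condition13} directly, since it is already cast purely in the coefficients. Substituting the dictionary, the quantity inside the absolute value collapses to $\tfrac{1}{2}\left(\tr M(5\tr^2 M-9\tr M^2)-54\det M\right)$, while the discriminant term satisfies $a_2^2-3a_1=\tfrac{1}{2}(3\tr M^2-\tr^2 M)$, so that $2(a_2^2-3a_1)^{3/2}=\tfrac{1}{\sqrt{2}}(3\tr M^2-\tr^2 M)^{3/2}$. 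Multiplying the resulting inequality through by $2$ then yields exactly condition \ref{condition3}.

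The main obstacle I anticipate is bookkeeping rather than conceptual: tracking the factor $\tfrac{1}{2}$ carried by $a_1$ through the terms $2a_2^3$ and $9a_1a_2$, and correctly propagating the $3/2$ power acting on $a_2^2-3a_1$ so that the prefactor $2$ combines with $2^{-3/2}$ into the $\sqrt{2}$ appearing in condition \ref{condition3}. One should also note explicitly that condition \ref{condition2} forces $3\tr M^2-\tr^2 M>0$, so the right-hand side of condition \ref{condition3} is a well-defined positive real and the $3/2$ power is legitimate; this is the exact analogue of the earlier remark that condition \ref{condition12} makes $\lambda_4,\lambda_5$ real. With these verifications the two theorems are seen to be equivalent, which completes the argument.
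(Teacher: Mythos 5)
Your proposal is correct and follows exactly the route the paper intends: the paper states Eq.~(\ref{polcar}) and the reformulation (\ref{condicion3_1}) precisely so that Theorem~\ref{T1} follows from Theorem~\ref{theorem_1} by the coefficient substitution $a_2=-\tr M$, $a_1=\tfrac{1}{2}(\tr^2M-\tr M^2)$, $a_0=-\det M$, and your bookkeeping of the factors (the $\tfrac12$ in the cubic expression and the $2\cdot 2^{-3/2}=\tfrac{1}{\sqrt2}$ on the right-hand side, followed by multiplication by $2$ to produce the $\sqrt2$) checks out. The paper gives no explicit proof of Theorem~\ref{T1}, so your write-up simply supplies the omitted verification by the same method.
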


The theorem \ref{T1} will be applied to texture-zero matrices.

\section{Texture-zero Formalism}
%%%%%%%%%%%%%%%%%%%%--Texture-zero Formalism--%%%%%%%%%%%%%%%%%%%%%%%%%%%%%%%%%%%
%%%%%%%%%%%%%%%%%%%%%%%%%%%%%%%%%%%%%%%%%%%%%%%%%%%%%%%%%%%%%%%%%%%%%%%%%%%%%%%%%

A texture-zero matrix is a $3\times 3$ matrix with zeros in some entries,
the way to count them is the following: a zero in the main diagonal
add as 1, while zero off main diagonal add as $1/2$. We need to sum all
zeros for both mass matrices u-quarks and d-quarks. For example, given $M_u$ and $M_d$ as
$$
M_u = \begin{pmatrix}
       * & 0 & * \\
       0 & * & * \\
       * & * & 0
      \end{pmatrix}, \qquad
M_d = \begin{pmatrix}
       * & 0 & * \\
       0 & * & 0 \\
       * & 0 & 0
      \end{pmatrix}.
$$
For $M_u$ we have one zero in the main diagonal, we add $(+1)$
and $2$ zeros off main diagonal that add $ 1(= 1/2 + 1/2)$, then
$M_u$ has a 2-zero texture structure. Considering now $M_d$ we have a 3-zero texture structure
$(1 + 2)$. Then, this set of matrices is said to have a 5-zero texture structure.

We say: a parallel structure
for $M_u$ and $M_d$ mass matrices means that if $M_u$
has zeros in some places then $M_d$ has zeros in the same
position than $M_u$. Non-parallel structure is when $M_u$ and
$M_d$ not have the same parallel structure.

\subsection{Notation}
%%%%%%%%%%%%%%%%%%%%--Texture-zero Formalism--%%%%%%%%%%%%%%%%%%%%%%%%%%%%%%%%%%
%%%%%%%%%%%%%%%%%%%%%%%%%%%%--Notation--%%%%%%%%%%%%%%%%%%%%%%%%%%%%%%%%%%%%%%%%
\noindent We start writing a symmetric matrix $M$ in the form:
$$M=
\left(%
\begin{array}{ccc}
   E & D & F \\
   D & C & B\\
   F & B & A\\
\end{array}%
\right). $$

\noindent This matrix is well determined by specifying six capital letters $(A, B, C, D, E, F)$ and their
corresponding positions, then we introduce the following notation:
\begin{itemize}
\item $M(x)$ is a matrix with a zero in the capital letter $x$, $(x = A, B, C, D, E, F)$.
\item $M(x,y)$ is a matrix with zeros in the capital letters $x$ and $y$, ($x$, $y$ = $A$,
$B$, $C$, $D$, $E$, $F$;  $x \neq y$).
\item $M(x,y,z)$ is a matrix with zeros in the capital letters $x$, $y$ and
$z$, $(x,y,z = A, B, C, D, E, F; \, x \neq y \neq z)$.
\end{itemize}
For example, a matrix with a zero in the position $F$ is:
$$
M(F)=
\left(%
\begin{array}{ccc}
   E & D &0 \\
   D & C & B\\
   0 & B & A\\
\end{array}%
\right),
$$

\noindent a matrix with zeros in the positions $C$ and $D$ is,
$$
M(C,D)=
\left(%
\begin{array}{ccc}
   E & 0 &F \\
   0 & 0 & B\\
   F & B & A\\
\end{array}%
\right),
$$

\noindent finally a matrix with zeros in the positions $C$, $D$ and $F$ is,
$$
M(C,D,F)=
\left(%
\begin{array}{ccc}
   E & 0 &0 \\
   0 & 0 & B\\
   0 & B & A\\
\end{array}%
\right).
$$

\noindent Using this notation, we are able to list all possible textures.

\noindent \textbf{1-zero texture structure}. \\
We have 6 different matrices, which are: \\
$M(A), \ M(C), \ M(E), \ M(B), \ M(D), \ M(F).$

\noindent\textbf{2-zero texture structure}. \\
In this case, we have 15 possibilities, which are:

$M(A,E), \ M(A,C), \ M(C,E), $

$ M(A,B), \ M(A,D), \ M(A,F),$

$ M(B,C), \ M(C,D), \ M(C,F), $

$ M(B,E), \ M(D,E), \ M(E,F),$

$ M(B,F), \ M(B,D), \ M(D,F). $

\noindent\textbf{3-zero texture structure}. \\
For this case, there are 20 different matrices, which are:

$M(A,B,C), \ M(A,C,F), \ M(A,C,D), $

$M(A,B,E), \ M(A,E,F), \ M(A,D,E), $

$M(B,C,E), \ M(C,E,F), \ M(C,D,E), $

$M(A,B,F), \ M(A,B,D), \ M(A,D,F), $

$M(B,C,F), \ M(B,C,D), \ M(C,D,F), $

$M(B,E,F), \ M(B,D,E), \ M(D,E,F), $

$M(A,C,E), \ M(B,D,F). $

Now we are ready to analyze which kind of textures have three different and positive eigenvalues, applying in each case one of the theorems presented in previous sections.

\section{Combined Analysis}
%%%%%%%%%%%%%%%%%%%%--Applications--%%%%%%%%%%%%%%%%%%%%%%%%%%%%%%%%%%%%%%%%%%%%
%%%%%%%%%%%%%%%%%%%%%%%%%%%%%%%%%%%%%%%%%%%%%%%%%%%%%%%%%%%%%%%%%%%%%%%%%%%%%%%%

\noindent The aim of this section is give to quark mass matrices the structure of zero textures and
find which of these structures have real, positive and different eigenvalues. In order to start the
analysis in a systematic way, we need to implement
another sub-classification, which depends on whether the matrix has or not zeros in the
main diagonal, doing this, first we analyze the 3-zeros textures, after this, we
study the 2-zero textures and finally the 1-zeros textures.

\subsection{3-zero analysis}
%%%%%%%%%%%%%%%%%%%%--Applications--%%%%%%%%%%%%%%%%%%%%%%%%%%%%%%%%%%%%%%%%%%%%
%%%%%%%%%%%%%%%%%%%--6-zeros analysis--%%%%%%%%%%%%%%%%%%%%%%%%%%%%%%%%%%%%%%%%%

\noindent According the sub-classification given above, the 3-zero textures present the following cases:
 \begin{enumerate}
   \item Without zeros in the main diagonal there is one case $M(B,D,F)$.
   \item With one zero in the main diagonal exist $9$ cases: $ M(A,B,F)$, $M(A,B,D)$, $M(A,D,F)$, $M(B,C,F)$, $M(B,C,D)$, $M(C,D,F)$, $M(B,E,F)$, $M(B,D,E)$, $M(D,E,F)$.
   \item With two zeros in the main diagonal there are $9$ cases: $M(A,B,C)$, $M(A,C,F)$, $M(A,C,D)$, $M(A,B,E)$, $M(A,E,F)$, $M(A,D,E)$, $M(B,C,E)$, $M(C,E,F)$, $M(C,D,E)$.
   \item With three zeros in the main diagonal we have only $1$ case ($M(A,C,E)$).
 \end{enumerate}

 We obtain a total of $20$ different possibilities. We only present the analysis of the following three cases.

\begin{itemize}
  \item Applying the Theorem \ref{T1} (\ref{condition1b}) the trivial $M(A,C,E)$ case is ruled out\footnote{In this work, we are looking for textures with  positive and different eigenvalues, therefore,
textures with two equal eigenvalues or one of them negative, we say that, they are ruled out}
  \item Now, we analyze the Fritzsch 6-zero texture given by $M(C,E,F)$ \cite{Fritzsch:1977za}.
Applying again the Theorem \ref{T1} (\ref{condition1b}) we must have $\tr M(C,E,F) = A > 0$, from
the condition (\ref{condition1a}) $\det M(C,E,F) = -D^2 A < 0$ that is a contradiction, because
of that this 6-zero texture is ruled out.
  \item Next, we analyze the following texture $M(A,D,F)$.
The condition (\ref{condition1b}) of the Theorem \ref{T1}  we have that
$\tr M(A,D,F) = C + E > 0$ and from (\ref{condition1a})
$\det M(A,D,F) = -E B^2  > 0, \ \Leftrightarrow \ E < 0 \ \Rightarrow \ C > 0 \ \Rightarrow \ EC <0$.
The condition (\ref{condition1c}) of the Theorem \ref{T1} implies that
$0 < E^2 + C^2 + 2B^2 < E^2 + C^2 + 2EC \ \Rightarrow \ 0 < EC$ and we have a contradiction
and this texture is also ruled out.
\end{itemize}

We have analyzed the others $17$ cases and we found that the only case that
is not excluded is $M(B,D,F)$, obviously being $A, \ B$ and $C$ the eigenvalues ($A \neq C \neq  E > 0$).

\subsection{2-zero analysis}
%%%%%%%%%%%%%%%%%%%%--Applications--%%%%%%%%%%%%%%%%%%%%%%%%%%%%%%%%%%%%%%%%%%%%
%%%%%%%%%%%%%%%%%%%--4-zeros analysis--%%%%%%%%%%%%%%%%%%%%%%%%%%%%%%%%%%%%%%%%%

\noindent These kind of textures have the following cases:

\begin{enumerate}
  \item Without zeros in the main diagonal there are $3$ cases: $M(B,F)$, $M(B,D)$, $M(D,F)$.
  \item With one zero in the main diagonal exist $9$ cases: $M(A,B)$, $M(A,D)$, $M(A,F)$, $M(B,C)$, $M(C,D)$, $M(C,F)$, $M(B,E)$, $M(D,E)$, $M(E,F)$
  \item With two zeros in the main diagonal there are $3$ cases: $M(A,E)$, $M(A,C)$, $M(C,E)$.
\end{enumerate}
We present the analysis of some cases more representative:

\begin{itemize}
  \item We start with the matrix $M(C,E)$. If we compute $\tr^2 M(C,E)$,
$\tr M(C,E)^2$ and we apply the condition (\ref{condition1c}) of the Theorem \ref{T1}, we obtain:
$$
2(D^2 + F^2 + B^2) + A^2 < A^2,
$$
that is a contradiction. We have found that $M(A,E)$ and $M(A,C)$ are ruled out too.
  \item The second example is the Fritzsch 4-zero texture given by $M(E,F)$ \cite{Fritzsch:2002ga}.
From the Theorem \ref{T1} follows that the condition (\ref{condition1a}) $\det M(E,F) = -A D^2  > 0$
implies $A < 0$, and of the condition (\ref{condition1b}) $\tr M(E,F) = C + A > 0$ we
have that $C > 0$ and then $A C < 0$. Now we compute $\tr^2 M(C,E), \ \tr M(C,E)^2$ and
using the condition (\ref{condition1c}) of the Theorem \ref{T1}, we obtain:
$$
2(D^2 + B^2) + C^2 + A^2 < C^2 + A^2 + 2AC,
$$
then $AC > 0$, that is a contradiction.
\end{itemize}

We have analyzed the eight cases $M(A,B)$, $M(A,D)$, $M(A,F)$, $M(B,C)$, $M(C,D)$, $M(C,F)$, $M(B,E)$, $M(D,E)$ and we found that are ruled out.

The cases that are in agreement with the condition (\ref{condition1}) of the
Theorem \ref{T1} are $M(B,F)$, $M(B,D)$ and $M(D,F)$, this means that, it exist a range of values of
$(B,F)$, $(B,D)$ and $(D,F)$ where these textures have real, positive and different eigenvalues.

\subsection{1-zero analysis}
%%%%%%%%%%%%%%%%%%%%--Applications--%%%%%%%%%%%%%%%%%%%%%%%%%%%%%%%%%%%%%%%%%%%%
%%%%%%%%%%%%%%%%%%%--2-zeros analysis--%%%%%%%%%%%%%%%%%%%%%%%%%%%%%%%%%%%%%%%%%

Here we only have two cases,

\begin{enumerate}
  \item Without zeros in the main diagonal belong three different possibilities $M(B)$, $M(D)$ and $M(F)$.
  \item With one zero in the main diagonal also belong three different possibilities $M(A)$, $M(C)$ and $M(E)$.
\end{enumerate}

We only present the analysis of $M(A)$. The condition (\ref{condition1b}) produces
$E+C>~0$, the condition (\ref{condition1a}) implies that $2BDF-B^2E-F^2C>0$ and the
condition (\ref{condition1c}) gives $2(B^2+D^2+F^2)+E^2+C^2<E^2+C^2+2EC$,
the last three inequalities are equivalents with
\begin{eqnarray}
  E+C &>& 0, \label{eq1}\\
  2BDF &>& B^2E+F^2C, \label{eq2}\\
  0<B^2+D^2+F^2 &<& EC, \label{eq3}
\end{eqnarray}
from (\ref{eq1}) and (\ref{eq3}) we have that $E>0$ and $C>0$, therefore
\begin{eqnarray}
% \nonumber to remove numbering (before each equation)
  -2BF\sqrt{EC} &<& B^2E+F^2C, \label{eq4}\\
  2BF\sqrt{EC} &<& B^2E+F^2C, \label{eq5}
\end{eqnarray}
now if $BF>0$, the inequalities (\ref{eq3},\, \ref{eq5},\, \ref{eq2}) produce
the following chain of inequalities
\begin{equation*}
    2BF\sqrt{B^2+D^2+F^2}<2BF\sqrt{EC}<B^2E+F^2C<2BDF,
\end{equation*}
and then
\begin{equation*}
    \sqrt{B^2+D^2+F^2}<D,
\end{equation*}
that is a contradiction. If $BF<0$ use (\ref{eq4}). We have analyzed
the other 2 cases $M(C)$, $M(E)$ and we found that are ruled out.

The cases that are in agreement with the condition (\ref{condition1}) of the
Theorem \ref{T1} are $M(B)$, $M(D)$ and $M(F)$.

Summing up this section, the zero texture mass matrices that they have real, positive and
different eigenvalues are:

$$
M(B, \, F), \, M(B, \, D), \, M(D, \, F), \, M(B), \, M(D) \, \text{and} \, M(F).
$$

Our results are in agreement with \cite{Branco:1999nb}, where the authors
using Weak Basic Transformations they have shown that any symmetric
texture with (1,1) zero entry has at least one negative eigenvalue.

\section{$V_{ckm}$ Properties}
%%%%%%%%%%%%%%%%%%%%--$V_{ckm}$ Properties--%%%%%%%%%%%%%%%%%%%%%%%%%%%%%
%%%%%%%%%%%%%%%%%%%%%%%%%%%%%%%%%%%%%%%%%%%%%%%%%%%%%%%%%%%%%%%%%%%%%%%%%

Another important quantity that any quark mass matrices need to satisfied
it is reproduce the experimental values of the $V_{ckm}$ for this reason, in this
section we analyze the $V_{ckm}$ phenomenology, in the first
part and  considering a set of four zeros for mass matrices, we
note the presence of zeros in the $V_{ckm}$ that depend if we have
a parallel and non parallel structures in the quark mass matrices, in
the second part we choose a particular non parallel case and
compute the $V_{ckm}$ matrix. In order to fit this $V_{ckm}$ matrix with the
experimental $V_{ckm}$ matrix we introduce a perturbation analysis.
Finally  we present a set of three zeros where the
$V_{ckm}$ fits numerically.

\subsection{$V_{ckm}$ from 4-zero texture set}
%%%%%%%%%%%%%%%%%%%%--$V_{ckm}$ Phenomenology--%%%%%%%%%%%%%%%%%%%%%%%%%%%%%
%%%%%%%%%%%%%%%%%%%%particular case 4-zero texture%%%%%%%%%%%%%%%%%%%%%%%%%%%%%%%%%%%%%%

In the previous sections it was shown that $M(B,F), \ M(B,D)$ and $M(D,F)$
are matrices with simple, real and different eigenvalues. When the mass matrix
of u-type quarks and the mass matrix of and d-type
quarks both have a parallel structure ({\it e.g.} $M_u = M_u(B_u,F_u)$
and $M_d = M_d(B_d,F_d)$), one direct implication is that the $V_{ckm}$ has the
same texture structure as the mass matrices ($V_{ckm}= V_{ckm}(B_{ckm},F_{ckm})$)
and we cannot reproduce the experimental values of the $V_{ckm}$ elements
because of that, all these three cases are ruled out.

Now, if the mass matrix of u-type quarks and the mass matrix of and d-type
quarks have not a parallel structure, all nine cases were analyzed and
always we find one zero element (off main diagonal) in the $V_{ckm}$ matrix.
We present the case where the best fit of the $V_{ckm}$ is found, this is because
we can obtain analytic expressions as well as
a lot of information about the mass matrices. For this, we choose the mass
matrix $M(D,F)$ texture for u-type quarks, and the matrix $M(B,F)$ texture for
d-type quarks. Then we have that

$$
M_u=
\left(%
\begin{array}{ccc}
   m_u & 0 & 0 \\
   0 & C_u & B \\
   0 & B & A_u \\
\end{array}%
\right), \qquad
 M_d=
\left(%
\begin{array}{ccc}
   E_d & D & 0 \\
   D & C_d & 0 \\
   0 & 0 & m_b \\
\end{array}%
\right).
$$
From the appendix (\ref{x}) and (\ref{y}), the above matrices take the form:
 $$ \ M_u=
\left(%
\begin{array}{ccc}
   m_u & 0 &0 \\
   0 & \mu_{ct} + \sqrt{\delta_{tc}^2 -B^2} & B\\
   0 & B & \mu_{ct} - \sqrt{\delta_{tc}^2 -B^2}\\
\end{array}%
\right), $$

$$
M_d=
\left(%
\begin{array}{ccc}
   \mu_{ds} + \sqrt{\delta_{sd}^2 -D^2} & D & 0 \\
   D & \mu_{ds} - \sqrt{\delta_{sd}^2 -D^2} & 0 \\
   0 & 0 & m_b\\
\end{array}%
\right),
$$
where $\mu_{qi\,qj} = \dfrac{m_{qi}+m_{qj}}{2}$ and $\delta_{qi\,qj} = \dfrac{m_{qi}-m_{qj}}{2}$ 
(with $m_{qi}>m_{qj}$, $i,\,j = 1,2,3$ and $q=u,\,d$). The quantities 
$\mu_{qi\,qj}$ and $\delta_{qi\,qj}$ have a interesting physical meaning, the 
first one is the average mass, and for the second one we can rewriting 
as $2\delta_{qi\,qj} +m_{qj}=m_{qi}$, then $2\delta_{qi\,qj}$  is the quantity 
that distinguishes the masses, i.e. the particles $m_{qi}$ and
$m_{qj}$ are different because their mass are different and the factor of
difference is $2\delta_{qi\,qj}$. The matrices that diagonalize the mass matrices are
\begin{equation}
\label{UU_UD}
 \ U_u=
\left(%
\begin{array}{ccc}
   1 & 0 & 0 \\
   0 & \cos\beta & \sin\beta\\
   0 & -\sin\beta & \cos\beta\\
\end{array}%
\right), \qquad
\ U_d=
\left(%
\begin{array}{ccc}
   \cos\alpha & \sin\alpha & 0 \\
   -\sin\alpha & \cos\alpha & 0\\
   0 & 0 & 1\\
\end{array}%
\right),
\end{equation}
where $\sin\alpha = \frac{D}{\sqrt{D^2 +(y_d - m_d)^2}}$ and
$\sin\beta = \frac{B}{\sqrt{B^2 +(y_u - m_c)^2}}$.

Now we computing the $V_{ckm}= U_u^T U_d$ matrix

\begin{equation}
\label{Vckm_4tex}
 \ V_{ckm}=
\left(%
\begin{array}{ccc}
   \cos\alpha & \sin\alpha & 0 \\
   -\cos\beta \sin\alpha & \cos\beta \cos\alpha & -\sin\beta \\
   -\sin\beta \sin\alpha & \sin\beta\cos\alpha & \cos\beta\\
\end{array}%
\right).
\end{equation}
Setting:
\begin{equation}
\label{seno_alfa}
 \sin\alpha = V_{us} = \lambda, \qquad
 \sin\beta = -V_{cb} =  -A \lambda^2,
\end{equation}
where $\lambda$ is the Wolfenstein parameter and $A$ is a real number of order one.

The $V_{ckm}$ matrix takes the form:
$$ \ V_{ckm}=
\left(%
\begin{array}{ccc}
   1 -\frac{\lambda^2}{2} & \lambda & 0 \\
   -\lambda & 1 -\frac{\lambda^2}{2} & A \lambda^2 \\
   A \lambda^3 & -A \lambda^2 & 1 \\
\end{array}%
\right) + O(\lambda^4).
$$
With this election of texture structure of the mass matrices of quarks, we can
reproduce (in Wolfenstein parametrization) eight $V_{ckm}$ parameters and
the $(V_{ckm})_{13}$ element is zero. Now, with this information we can know
explicitly each element of the mass matrices, from (\ref{seno_alfa}) we have that
\begin{eqnarray}
\label{sin_alpha}
  \sin\alpha &=& \frac{D}{\sqrt{D^2 +(y_d - m_d)^2}} = V_{us}, \\
  \label{sin_beta}
  \sin\beta &=& \frac{B}{\sqrt{B^2 +(y_u - m_c)^2}} = - V_{cb},
\end{eqnarray}
the solutions for $D$ and $B$ are:
\begin{eqnarray}
\label{D_0}
 D_0 &=& \pm 2 \delta_{sd} V_{us}\sqrt{1 -V_{us}^2} \, \approx \pm 2 \delta_{sd} V_{us}, \\
 B_0 &=& \pm 2 \delta_{tc} V_{cb}\sqrt{1 -V_{cb}^2} \, \approx \pm 2 \delta_{tc} V_{cb},
\end{eqnarray}
and the mass matrices are:
\begin{equation}
\label{MMM}
 M_u=
\begin{pmatrix}
   m_u & 0 &0 \\
   0 & m_c + 2 \delta_{tc} V_{cb}^2  & \pm 2 \delta_{tc} V_{cb} \\
   0 & \pm 2 \delta_{tc} V_{cb} & m_t - 2 \delta_{tc} V_{cb}^2 \\
\end{pmatrix}, \quad
M_d=
\begin{pmatrix}
   m_d + 2 \delta_{sd} V_{us}^2 & \pm 2 \delta_{sd} V_{us} & 0 \\
   \pm 2 \delta_{sd} V_{us} & m_s - 2 \delta_{sd} V_{us}^2 & 0 \\
   0 & 0 & m_b\\
\end{pmatrix}.
\end{equation}

Finally the mass matrices can be written as:
\begin{eqnarray}
\label{MU_parts}
M_u &=& \bar{M}_u + 2 \delta_{tc} \, V_{cb}^2 \, \Delta M_u \pm 2 \delta_{tc} \, V_{cb} \, \delta M_u, \\
\label{MD_parts}
M_d &=& \bar{M}_d + 2 \delta_{sd} \, V_{us}^2 \, \Delta M_d \pm 2 \delta_{sd} \, V_{us} \, \delta M_d,
\end{eqnarray}
where the matrices $\Delta M_u$, $\Delta M_d$, $\delta M_u$ and $\delta M_d$ are given by:

$$
\Delta M_u =
\begin{pmatrix}
  0 & 0 & 0 \\
  0 & 1 & 0 \\
  0 & 0 & -1 \\
\end{pmatrix}, \quad
\Delta M_d =
\begin{pmatrix}
   1 & 0 & 0 \\
   0 & -1 & 0 \\
   0 & 0 & 0 \\
\end{pmatrix}, \quad
\delta M_u =
\begin{pmatrix}
  0 & 0 & 0 \\
  0 & 0 & 1 \\
  0 & 1 & 0 \\
\end{pmatrix}, \quad
\delta M_d =
\begin{pmatrix}
   0 & 1 & 0 \\
   1 & 0 & 0 \\
   0 & 0 & 0 \\
\end{pmatrix}.
$$

We observe that the mass matrices have three contributions; the first one
($\bar{M}$) comes from a diagonal matrix, where its elements correspond to
mass quarks, the second contribution ($\Delta M$) is a correction of diagonal
entries and it is characterized by the square of $(V_{ckm})_{12}$ and
$(V_{ckm})_{13}$ elements respectively. The last contribution ($\delta M$)
is off-diagonal correction characterized by the $(V_{ckm})_{12}$ and
$(V_{ckm})_{13}$ elements. Note that: off diagonal contribution is bigger
than the diagonal ones.

From (\ref{Vckm_4tex}) we can see that, we get one zero in $(V_{ckm})_{13}$ element, the experimental value for this element is around $0.00351$, this
invite us to apply perturbation theory to 4-zero texture (especially in the
example presented above) in order to remove this zero and get a better
approximation for this $V_{ckm}$ element.

\subsubsection{Perturbative analysis of 4-zero texture set}
%%%%%%%%%%%%%%%%%%%%--$V_{ckm}$ Phenomenology--%%%%%%%%%%%%%%%%%%%%%%%%%%%%%
%%%%%%%%%%%%%%%%%%%%%%%%4-zero texture perturbed%%%%%%%%%%%%%%%%%%%%%%%%%%%%%%

As we saw in previous section, when we consider a 4- zero texture set as structure of mass
matrices of  quarks, the presence of zeros in the $V_{ckm}$ matrix is unavoidable, the aim
of this part of the paper is use perturbation theory to remove these zeros and get small
quantities.

We consider that quark mass matrices can be divide in two parts:
\begin{equation}
 M_q = M_{q(2T)} + \epsilon N_q,
\end{equation}
where $M_{q(2T)} $ is a 2-zero texture, $N_q$ is known mass matrix and $\epsilon_q$
a small parameter (See appendix for more details). The new contributions to $V_{ckm}$
matrix comes from a antisymmetric matrix $X_q$.

In the example presented before, where $M_u = M_u(D_u, F_u)$ and $M_d = M_d(B_d, F_d)$
are the structures for the quark mass matrices,
one can reproduce eight  experimental values of $V_{ckm}$ elements and one get that the
$(V_{ckm})_{13}$ element is zero.  To remove this zero first we consider a perturbation
on $M_u = M_u(D_u, F_u)$ and keeping $M_d = M_d(B_d, F_d)$  unchanged, after that, we
will interchange the roles.

We consider that, $M'_u$ mass matrix  differs a small quantity
\footnote{$|\epsilon \, a_u| \sim |\epsilon \, b_u| \ll |C_u|, \, |A_u|, \, |B|, \, m_u $}
$\epsilon$ from $M_u$ in the positions $(1,2), \ (2,1), \ (1,3)$ and $(3,1)$.
$$
 M'_u =
\left(%
\begin{array}{ccc}
   m_u & \epsilon \, a_u & \epsilon \, b_u \\
   \epsilon \, a_u & C_u & B \\
   \epsilon \, b_u  & B & A_u \\
\end{array}%
\right),
$$
where $\epsilon$ is a real parameter in the interval $0 \le \epsilon \le 1$ and $a_u$, $b_u$
are parameters with mass units. $M'_u$ matrix can be written in the form
$$
 M'_u= M_u(D_u, F_u) + \epsilon \, N_u,
$$
where $M_u(D_u, F_u)$ matrix is given in (\ref{MMM}) and $N_u$ matrix is given by
$$
N_u=
\left(%
\begin{array}{ccc}
   0 & a_u &  b_u \\
   a_u & 0 & 0 \\
   b_u  & 0 & 0 \\
\end{array}%
\right).
$$
Following the analysis given in the appendix and applying right perturbation at
first order in $\epsilon$, we find
$$
O_u = U_u (1 + \epsilon X_u),
$$
where the $U_u$ matrix is given in (\ref{UU_UD}) and $X_u$ matrix is:
$$
 X_u =
\left(%
\begin{array}{ccc}
   0 & x_{1u} &  x_{2u} \\
   -x_{1u} & 0 & x_{3u} \\
   -x_{2u}  & -x_{3u} & 0 \\
\end{array}%
\right),
$$
and its elements are: $x_{1u} = \dfrac{a_u \, \cos \beta }{m_c -m_u} - \dfrac{b_u \, \sin \beta }{m_c -m_u} $,
 $x_{2u} = \dfrac{a_u \, \sin \beta }{m_t -m_u} + \dfrac{b_u \, \cos \beta }{m_t -m_u} $,
and $x_{3u} = 0$.

The new $V'_{ckm}$ matrix takes the following form:
\begin{eqnarray}
 V'_{ckm} &=& O_u^T \, U_d, \\
         &=& (1 - \epsilon X_u)U_u^T \, U_d, \\
         &=&  (1 - \epsilon X_u) V_{ckm}.
\end{eqnarray}
After some algebra, using (\ref{sin_beta}) and considering  $m_t > m_c >> m_u$,
we get that, the element $\left(  V'_{ckm} \right)_{13} $ has the form:
\begin{equation}
\label{u_perturbation}
\left(  V'_{ckm} \right)_{13} =  \left( \frac {V_{cb}^2 }{m_c} - \frac{1}{m_t} \right) \, \epsilon b_u
-  \left( \frac {V_{cb} }{m_c} \right) \, \epsilon a_u.
\end{equation}
We have non zero element, which its magnitude depend on $V_{cb}$ and the
perturbation parameters. The numerical contribution from $b_u $ goes like $10^{-6}$,
while the numerical contribution from $a_u $ goes like $10^{-5}$.  The smallest numerical
element of $M_u(D_u, F_u) $ matrix  is $m_u$, then we consider that the maximum value
of the perturbation is  $m_u/10$. We scanned all
allowed range of $\epsilon \, a_u$ and $\epsilon \, b_u$ parameters and we get that the best numerical absolute value is $8\times 10^{-6}$. For left and left-right
 perturbations (See Appendix), the numerical values were the same order.
 The absolute values of new $V'_{ckm}$ elements are:
$$
|V'_{ckm}|=
\left(%
\begin{array}{ccc}
   0.9753 & 0.2208 & 8\times 10^{-6} \\
   0.2206 & 0.9745 & 0.039 \\
   0.0086  & 0.0380 & 0.9992 \\
\end{array}%
\right).
$$
The values of the mass matrix parameters of $M'_u$ were:
$|\epsilon \, a_u| = |\epsilon \, b_u|  = 0.2 \sim \frac{|m_u|}{10} \ll ,m_u = 2.3, \, |A_u| = 172739, \, |C_u|= 1531.2, \, |D|=6697.47$, all quantities in MeV. The numerical values that corresponding to second order in $\epsilon$ are $O(10^{-8})$ or less.

Now we consider that $M'_d$ mass matrix differs a small quantity
\footnote{$|\epsilon \, a_d| \sim |\epsilon \, b_d|\ll |E_d|, \, |C_d|, \, |D|, \, m_b $}
$\epsilon \, a_d$,  $\epsilon \, b_d$ from $M_d$ in the positions $(1,3), \ (3,1), \ (3,2)$ and $(2,3)$.
$$
 M'_d=
\left(%
\begin{array}{ccc}
   E_d & D_0 & \epsilon \, a_d \\
   D_0 & C_d & \epsilon \, b_d  \\
   \epsilon \, a_d  & \epsilon \, b_d  & m_b \\
\end{array}%
\right),
$$
$M'_d$ matrix can be written in the form
$$
 M'_d= M_d(B_d, F_d) + \epsilon \, N_d,
$$
where $M_d(B_d, F_d)$ matrix is given in (\ref{MMM}) and $N_d$ matrix is given by
$$
N_d=
\left(%
\begin{array}{ccc}
   0 & 0 &  a_d \\
   0 & 0 & b_d \\
   a_d  & b_d & 0 \\
\end{array}%
\right).
$$
 Applying right perturbation at first order in $\epsilon$, we find
$$
O_d = U_d (1 + \epsilon X_d),
$$
where the matrix $X_d$ is:
$$
 X_d =
\left(%
\begin{array}{ccc}
   0 & x_{1d} &  x_{2d} \\
   -x_{1d} & 0 & x_{3d} \\
   -x_{2d}  & -x_{3d} & 0 \\
\end{array}%
\right),
$$
and its elements are $x_{1d} = 0$, $x_{2d} = \dfrac{a_d \, \cos \alpha }{m_b -m_d} -\dfrac{b_d \, \sin \alpha }{m_b -m_d}$ and
$x_{3d} = \dfrac{a_d \, \sin \alpha}{m_b -m_s} +\dfrac{b_d \, \cos \alpha}{m_b -m_s}$.

The new $V'_{ckm}$ matrix takes the following form:
\begin{eqnarray}
 V'_{ckm} &=& V_u^T \, O_d, \\
         &=& V_u^T \, V_d (1 + \epsilon X_d), \\
         &=&  V_{ckm} (1 + \epsilon X_d).
\end{eqnarray}
After some algebra, using (\ref{sin_alpha}) and considering  $m_t >> m_c >> m_u$,
we get that the element $\left(  V'_{ckm} \right)_{13} $ has the form:
\begin{equation}
\label{d_perturbation}
\left(  V'_{ckm} \right)_{13} =  V_{us}  \, \left( \frac {m_s}{m^2_b}  \right) \, \epsilon b_d
+  \left( \frac {1 }{m_b} \right) \, \epsilon a_d.
\end{equation}
We have non zero element, which its magnitude depend on $V_{us}$ and the
perturbation parameters. The numerical contribution from $b_d $ goes like $10^{-6}$,
while the numerical contribution from $a_d $ goes like $10^{-4}$.  The smallest numerical
element from matrix $M_d(B_d, F_d)$ is $E_d$, then we consider that the maximum value
of the perturbation is  $E_d/10$. We scanned all
allowed range of $\epsilon \, a_d$ and $\epsilon \, b_d$ parameters and we get that,
 the best numerical absolute value is
 $2\times 10^{-4}$. For left and left-right
 perturbations, the numerical values were the same order of magnitude.
 The absolute values of new $V'_{ckm}$ elements are:
$$
|V_{ckm}|=
\left(%
\begin{array}{ccc}
   0.9742 & 0.2253 & 0.0002 \\
   0.2251 & 0.9734 & 0.0406 \\
   0.0089  & 0.0396 & 0.9991 \\
\end{array}%
\right).
$$
The values of the parameters were:
$|\epsilon \, a_d| = |\epsilon \, b_d|= 0.9 \sim \frac{|E_d|}{10} \ll |E_d| = 9.37, \, |C_d|= 90.42, \, |D|=20.32, \, m_b = 4180$, all quantities in MeV. The numerical values that corresponding to second order in $\epsilon$ are $O(10^{-8})$ or less.

Also we have numerically analyzed all possibilities to get a perturbation
on both mass matrices  without get better numerically values in
the $V_{ckm}$ matrix.

From the analysis of this section, we conclude that 4-zero texture set in the normal and perturbative
cases are ruled out, because they can not reproduce the experimental values of the $V_{ckm}$ matrix.

\subsection{$V_{ckm}$ from 3-zero texture set}
%%%%%%%%%%%%%%%%%%%%--$V_{ckm}$ Phenomenology--%%%%%%%%%%%%%%%%%%%%%%%%%%%%%
%%%%%%%%%%%%%%%%%%%%%%%%3-zero textures%%%%%%%%%%%%%%%%%%%%%%%%%%%%%%%%%%%%%

The next case of structure is a 3-zero texture set, which it born when one
type of quarks has as mass matrix $M(B,F)$, $M(B,D)$ or $M(D,F)$ and the
other type of quarks has mass matrix $M(B)$, $M(D)$ or $M(F)$. We have in total
18 possible combinations\footnote{We shall discuss these kind of
textures in a forthcoming paper \cite{progress}}.

From the analysis presented before, we can point out two issues:
\begin{itemize}
 \item We can introduce a $(V_{ckm})_{13}$ element different from zero, setting
in appropriate way the values (1,3) and (3,1) in $M_d$ matrix. From (\ref{d_perturbation}),
we can note a lineal dependence between $(V_{ckm})_{13}$  and the perturbation,
$\epsilon a_d$ and if $|\epsilon a_d| \sim E_d$ we obtain the numerical value of
$(V_{ckm})_{13}$ very close that the experimental one. Then we will consider that
$F_d$ is the same order than $E_d$.

 \item The quark mass matrix can be split in two parts, a diagonal part
plus off-diagonal contributions, which both of them are in power
series of $V_{us}$ and $V_{cb}$ elements.
\end{itemize}

Considering the above statements, we take $M(D,F)$ as 2-zero structure for
u-type quarks, {\it i.e} it has the form given in (\ref{MU_parts}) and the
matrix that diagonalize it is (\ref{UU_UD}). For d-quarks we take $M(B)$ as
1-zero structure given by
$$
M_d=
\left(%
\begin{array}{ccc}
   E_d  & D_d & F_d \\
   D_d & C_d  & 0 \\
   F_d  & 0 & A_d  \\
\end{array}%
\right),
$$
where each element is parameterized as:
$$
\begin{array}{l}
\text{\textit{main diagonal elements}} \\
 A_d = m_b + x \, V_{us}^3, \\
 C_d = m_s - 2 \delta_{sd} V_{us}^2 + y \, V_{us}^3, \\
 E_d = m_d + 2 \delta_{sd} V_{us}^2 + z \, V_{us}^3
 \end{array} \qquad
 \begin{array}{l}
  \text{\textit{off diagonal elements}} \\
 D_d = + 2 \delta_{sd} V_{us}, \\
 F_d = m_d + 2 \delta_{sd} V_{us}^2,
\end{array}
$$
where $(x,\, y,\, z)$ are variables to find. Now as $M_d$ is congruent
with $Diag \left[m_d,\, m_s,\, m_b \right]$ we can write the following equations:
\begin{eqnarray}
Tr\,M_d &=& m_d + m_s + m_b, \nonumber \\
det\,M_d &=& m_d \, m_s \, m_b, \\
\dfrac{1}{2}\left[ \tr^2 M_d-\tr M_d^2 \right] &=&  m_d \, m_s + m_d \,m_b + m_s \, m_b, \nonumber 
\end{eqnarray}
this set of equations has six solutions for $(x,\, y,\, z)$, and we choose the solution that $ E_d < C_d < A_d$ is hold,
i.e. the numerical values for $(x,\, y,\, z)$ are $(1.84199,\, x + z,\, -1.84417)$, then
numerically the matrix $M_d$ results
$$
M_d=
\left(%
\begin{array}{ccc}
   9.42827 & 19.802896 & 9.380186 \\
   19.802896 & 90.1575 & 0 \\
   9.380186  & 0 & 4180.21 \\
\end{array}%
\right),
$$
and the numerical absolute values of $V_{ckm}$ elements are:
$$
|V_{ckm}|=
\left(%
\begin{array}{ccc}
   0.974118 & 0.226027 & 0.00224906 \\
   0.225925 & 0.973273 & 0.0412108 \\
   0.00712578  & 0.0406523 & 0.999148 \\
\end{array}%
\right),
$$
that is in agreement with the experimental value of $V_{ckm}$ matrix.

This is a good example that shows that 3-zero texture sets are viable candidates
to model the quark mass matrices.

\section{Conclusions}
%%%%%%%%%%%%%%%%%%%%--Conclusions--%%%%%%%%%%%%%%%%%%%%%%%%%%%%%%%%%
%%%%%%%%%%%%%%%%%%%%%%%%%%%%%%%%%%%%%%%%%%%%%%%%%%%%%%%%%%%%%%%%%%%%
In this paper, in the understanding that by definition the physical
mass of the quarks and charged leptons
are the eigenvalues of the mass matrices. We found the necessary and
sufficient conditions over the
characteristic polynomial coefficients from any symmetric $3$ by $3$
matrix, so that it has real, simple
and positive roots. We apply this formalism to analyze the symmetric
texture-zero quark matrices, we found
that a lot of them are ruled out ({\it i.e.} they have two equal
eigenvalues or one of them negative). Only
the zero texture
matrices $M(B,F)$, $M(B,D)$, $M(D,F)$,
$M(B)$, $M(D)$ and $M(F)$ are in agreement with this condition. In the
texture-zero formalism, the matrices
have variable coefficients, the conditions \ref{condition2} and
\ref{condition3} impose restrictions
over these coefficients, this means that, we need to find the
complete domain of the coefficients in
both mass matrices, u-type quarks and d-type quarks, in order
to approximate the experimental values of the $V_{ckm}$ matrix. We develop analytically
the case of four zero sets, and we show the set that gives the best approximation to the $V_{ckm}$ matrix and
always a zero element in the theoretical $V_{ckm}$ matrix is found, to remove this zero,
we implement a perturbation method and we analyze the 4 zero texture set, even with these results the four-zero texture sets are ruled out.
The quark mass matrix can be split in in two parts, a diagonal part
plus off-diagonal contributions, which both of them are in power
series of $V_{ckm}$ elements, statement that is valid for three and four zero
texture sets.
With an example, we show that 3-zero texture sets are viable to model the quark mass matrices, and this
structure in the minimal which satisfy that they have real, positive and different eigenvalues and also
it reproduce the experimental values of the $V_{ckm}$ matrix.
%%%%%%%%%%%%%%%%%%%%--Acknowledgments--%%%%%%%%%%%%%%%%%%%%%%%%%%%%%
%%%%%%%%%%%%%%%%%%%%%%%%%%%%%%%%%%%%%%%%%%%%%%%%%%%%%%%%%%%%%%%%%%%%
\begin{acknowledgments}
The authors acknowledge to Baltazar Aguirre Hern\'andez and Lorenzo D\'{\i}az-Cruz
for their useful comments. The authors also acknowledge to CONACyT(\textit{M\'exico}),
PROMEP(\textit{M\'exico}) and SNI(\textit{M\'exico}) for their financial support.
\end{acknowledgments}

%%%%%%%%%%%%%%%%%%%%%%--Apendix--%%%%%%%%%%%%%%%%%%%%%%%%%%%%%%
%%%%%%%%%%%%%%%%%%%%%%%%%%%%%%%%%%%%%%%%%%%%%%%%%%%%%%%%%%%%%%%
\appendix
%%%%%%%%%%%%%%%%%%%%%%%%%%%%%%%%%%%%%%%%%%%%%%%%%%%%%%%%%%%%%%%
%%%%%%%%%%%%%%%%%%%%%%%%%%%%%%%%%%%%%%%%%%%%%%%%%%%%%%%%%%%%%%%
\section{Proof Theorem}

In this appendix we proof the theorem \ref{theorem_1}, for this we need the following statement

\begin{lemma}\label{lemma}
The polynomial of second degree $p(\lambda)=\lambda^2+a_1\lambda +a_0$ has two
real, simple and positive roots if and only if the following condition hold.
\begin{equation}
a_1<0<a_0<\dfrac{a_1^2}{4}.\label{pol2degree}
\end{equation}
\end{lemma}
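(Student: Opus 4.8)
The plan is to characterize the roots directly through the discriminant together with Vieta's formulas, since for a quadratic the two roots $\lambda_1,\lambda_2$ are completely controlled by the two symmetric functions $\lambda_1+\lambda_2=-a_1$ and $\lambda_1\lambda_2=a_0$, while reality and distinctness are governed by the sign of the discriminant $\Delta=a_1^2-4a_0$. I would first record these three quantities, and then translate each of the three desired properties (real, simple, positive) into a statement about $a_0$ and $a_1$, handling the biconditional in the two directions.

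For the forward implication, I would assume $p$ has two distinct positive real roots and read off three facts. Distinctness and reality force $\Delta>0$, i.e. $a_0<\dfrac{a_1^2}{4}$; both roots positive forces the product positive, i.e. $a_0=\lambda_1\lambda_2>0$; and both roots positive forces the sum positive, i.e. $-a_1=\lambda_1+\lambda_2>0$, giving $a_1<0$. Chaining these three yields exactly the claimed inequality $a_1<0<a_0<\dfrac{a_1^2}{4}$. For the converse I would run the same facts in reverse: from $a_0<\dfrac{a_1^2}{4}$ we obtain $\Delta>0$, so the two roots $\lambda_{1,2}=\dfrac{-a_1\pm\sqrt{\Delta}}{2}$ are real and distinct; from $a_0>0$ their product is positive, so they share a common sign; and from $a_1<0$ their sum $-a_1$ is positive, whence two same-signed numbers with positive sum must both be positive.

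The argument is routine rather than deep, so the work is less a matter of overcoming an obstacle than of stating things cleanly. The one hinge on which the positivity half turns is the elementary step \emph{same sign plus positive sum implies both positive}, and I would make sure this is written explicitly rather than asserted. I would also note in passing that all three inequalities are genuinely needed: in particular $a_1<0$ is not implied by $0<a_0<\dfrac{a_1^2}{4}$, since the latter pair is consistent with $a_1>0$. As a cross-check one may verify positivity instead by showing that the smaller root $\dfrac{-a_1-\sqrt{\Delta}}{2}>0$ is equivalent, after squaring $-a_1>\sqrt{\Delta}$ (legitimate once $-a_1>0$ is known), to $a_0>0$; I would keep the Vieta phrasing as the main line and mention this only as confirmation.
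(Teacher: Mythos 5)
Your proof is correct and complete: the discriminant condition $a_0<a_1^2/4$ gives real distinct roots, and the Vieta relations $\lambda_1\lambda_2=a_0>0$ and $\lambda_1+\lambda_2=-a_1>0$ together give positivity, with the key elementary step (same sign plus positive sum implies both positive) made explicit. The paper itself offers no argument for this lemma beyond the remark that it ``follows from a simple computation,'' so your write-up supplies exactly the routine computation the authors omit, and there is nothing to compare beyond that.
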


\begin{proof} Follows from a simple computation.
\end{proof}

Now we remember and proof the theorem \ref{theorem_1}.

\noindent \textbf{Theorem 1}
The polynomial of degree 3, $p(\lambda)=\lambda^3+a_2\lambda^2+a_1\lambda +a_0$ has three different, real and positive roots if and only if the following conditions over its coefficients $a_0$,  $a_1$ $a_2$ hold.
\begin{enumerate}
\item $a_0,a_2<0<a_1.$
\item $3a_1<a_2^2$.
\item If $\lambda_4=\dfrac{-a_2+\sqrt{a_2^2-3a_1}}{3}$ and
$\lambda_5=\dfrac{-a_2-\sqrt{a_2^2-3a_1}}{3}$, then $p(\lambda_4)<0$ and $p(\lambda_5)>0$.
\end{enumerate}
%\end{theorem}

\begin{proof}
If exists three different $\lambda_i\in\R^+$,  $(i=1,2,3)$ such that
$p(\lambda)=(\lambda-\lambda_1)(\lambda-\lambda_2)(\lambda-\lambda_3)=\lambda^3-(\lambda_1+\lambda_2+\lambda_3)\lambda^2
+(\lambda_1\lambda_2+\lambda_1\lambda_3+\lambda_2\lambda_3)\lambda
-\lambda_1\lambda_2\lambda_3$, then by equality of polynomials we obtain

\begin{itemize}
\item$a_2=-(\lambda_1+\lambda_2+\lambda_3)<0$,
\item$a_0=-\lambda_1\lambda_2\lambda_3<0$,
\item$a_1=\lambda_1\lambda_2+\lambda_1\lambda_3+\lambda_2\lambda_3>0$,
\end{itemize}
and the condition \emph{1} hold.

Without loss of generality, we suppose that  $0<\lambda_1<\lambda_2<\lambda_3$\footnote{In this work, a chain of inequalities $a < b < c < \dots$, the first inequality
is $a<b$, the second one is $b<c$ and so on.}, if we have a polynomial of degree 3 with three real, simple roots then there exits two critical points, they are roots of the first derivative, i.e. the condition \emph{2} hold. Now if the roots of polynomial are positive then the critical points are positive too and the follow chain of inequalities hold  $\lambda_1<\lambda_5<\lambda_2<\lambda_4<\lambda_3$. From the coefficient of $\lambda^3$ is 1, we have that $\lim_{\lambda\to\infty}p(\lambda)=\infty$ and
$\lim_{\lambda\to-\infty}p(\lambda)=-\infty$, then for points less than $\lambda_1$ the polynomial is negative, we applied the Rolle theorem to the roots $\lambda_1$ and $\lambda_2$, therefore the polynomial has a maximum value between $\lambda_1$ and $\lambda_2$, and therefore $p(\lambda_5)>0$. Similarly for points greater than $\lambda_3$ the polynomial is positive and we applied the Rolle theorem to the roots $\lambda_2$ and $\lambda_3$, and the polynomial has a minimum, this value is negative i.e. $p(\lambda_4)<0$.

Conversely, we have that $p(\lambda)=\lambda^3+a_2\lambda^2+a_1\lambda +a_0$ such
that the conditions \textit{1, 2} and \textit{3} hold. The conditions \emph{1,2} join to lemma \ref{lemma} implies that $p'(\lambda)$ has two real, simple and positive roots given by :

\begin{equation*}
\lambda_4=\dfrac{-a_2+\sqrt{a_2^2-3a_1}}{3}, \quad \lambda_5=\dfrac{-a_2-\sqrt{a_2^2-3a_1}}{3}.
\end{equation*}
First we observe that $0<\lambda_5<\lambda_4$, now computing
$p''(\lambda_4)=2\sqrt{a_2^2-3a_1}>0$, this implies in $\lambda_4$ we have a
minimum whereas $p''(\lambda_5)=-2\sqrt{a_2^2-3a_1}<0$ and then in $\lambda_5$ we
have a maximum. We applied repeatedly the Intermediate Value theorem. From \textit{1}, we have that $p(0)=a_0<0$ and from \textit{3} it follows that $p(\lambda_5)>0$, then we have a positive root. The condition \textit{3}, $p(\lambda_4)<0$ and $p(\lambda_5)>0$, guarantee that exits a second root between $\lambda_5$ and $\lambda_4$, finally due to the coefficient to $\lambda^3$ is positive $p(\lambda)$ we have that $\lim_{\lambda\to\infty}p(\lambda)=\infty$
and this implies $p(\lambda_4)<0$, then $p(\lambda)$ intersects to horizontal axis  one more time in the third root.
\end{proof}

%%%%%%%%%%%%%%%%%%%%%%%%%%%%%%%%%%%%%%%%%%%%%%%%%%%%%%%%%%%%%%%
%%%%%%%%%%%%%%%%%%%%%%%%%%%%%%%%%%%%%%%%%%%%%%%%%%%%%%%%%%%%%%%
\section{2-zero textures}

In the above sections we show that the matrices $M(B,F), \ M(B,D)$ and $M(D,F)$
can have three positive, real and different eigenvalues, these matrices are diagonal
by blocks ( one block $1 \times 1$ and other
block $2 \times 2$). They can be diagonalize by matrices that are also
diagonal by blocks.

Pay attention only in the $2 \times 2$ block. The mass matrix can be rewritten as:
$$
M_{2 \times 2} =
\begin{pmatrix}
   y & K  \\
   K & x  \\
\end{pmatrix},
\qquad
(K = B, D, F).
$$
This matrix has to be congruent with
$$
\bar{M}_{2 \times 2} =
\begin{pmatrix}
  m_i & 0 \\
  0 & m_j \\
\end{pmatrix}, \qquad (i,j)= (1,2),(2,3),(1,3).
$$

This implies the following relations among their elements:
\begin{eqnarray}
% \nonumber to remove numbering (before each equation)
  x+y &=& m_i+m_j, \\
  x y-K^2 &=& m_i \, m_j,
\end{eqnarray}
the solutions for $x$ and $y$ are:
\begin{eqnarray}
\label{x}
x(K) &=& \mu_{ij} \pm \sqrt{\delta_{ij}^2 -K^2}, \\
\label{y}
y(K) &=& \mu_{ij} \mp \sqrt{\delta_{ij}^2 -K^2},
\end{eqnarray}
where $\mu_{ij} = \dfrac{m_i+m_j}{2}$, if $m_i>m_j, \ \delta_{ij} = \dfrac{m_i-m_j}{2}$ and the
parameter $K$ has to satisfy $|K| \leq \delta_{ij}$.

The matrix that diagonalize the matrix $M_{2 \times 2}$ always can be set as:
$$
\left(%
\begin{array}{cc}
   \cos\theta & \sin\theta \\
   -\sin\theta & \cos\theta \\
\end{array}%
\right),$$
where:
$\sin\theta=\dfrac{K}{\sqrt{K^2+(y-m_i)^2}}$ and $\theta \in[0,\pi/4]$.

%%%%%%%%%%%%%%%%%%%%%%%%%%%%%%%%%%%%%%%%%%%%%%%%%%%%%%%%%%%%%%%
%%%%%%%%%%%%%%%%%%%%%%%%%%%%%%%%%%%%%%%%%%%%%%%%%%%%%%%%%%%%%%%
\section{Perturbation Theory}
In this appendix we applied perturbation theory to texture formalism
\footnote{The first work in this direction was \cite{Fritzsch:2011cu}
and it applies non-hermitian perturbations to the 6-zero texture}.

We start dividing the complete mass matrix in two parts
\begin{equation}
 M = M_0 + \epsilon N,
\end{equation}
where $M_0$ and $N$ are known mass matrices and $\epsilon$ is a small parameter.
We look for a $O$ matrix that diagonalize the $M$ matrix in the following way:
\begin{equation}
\label{A1}
O^T M O = \bar{M},
\end{equation}
where $\bar{M}$ is a diagonal matrix.

We have three different versions of the perturbation method according to
the way that $O$ matrix is proposed, namely:
\begin{enumerate}
 \item \textit{Right Perturbation}, when the $O$ matrix takes the form
\begin{equation}
\label{RPT}
O = O_0 (1 + \epsilon X).
\end{equation}

\item \textit{Left Perturbation}, when the matrix $O$ takes the form
\begin{equation}
\label{LPT}
O =  (1 + \epsilon X) \, O_0.
\end{equation}

\item  \textit{Left-Right Perturbation}, when the matrix $O$ takes the form
\begin{equation}
\label{LRPT}
O =  (1 + \epsilon X) \, O_0 \, (1 + \epsilon X).
\end{equation}
\end{enumerate}

Where the $O_0$ matrix diagonalize the $M_0$ matrix ($ O_0^T \, M_0 \,O_0 = \bar{M}$)
and the $X$ matrix is determined in this process.

From the orthogonality condition of the $O$ matrix it is find that $X$ is antisymmetric matrix
$X^T = -X$ and $Y +Y^T = X^2$ for all cases.

Notation: We are considering $\bar{A} = O^T_0 \, A \, O_0$ for any matrix $A$.

\subsection{Right Perturbation}

Substituting the form the $O$ matrix (Eq. \ref{RPT}) into (Eq.\ref{A1}):
$$
\left[ O_0 (1 + \epsilon X) \right]^T \, M \, \left[ O_0 (1 + \epsilon X) \right] = \bar{M}.
$$
After some algebra one gets, at first order in $\epsilon$ parameter, that the
$X$ matrix has to satisfied:
\begin{equation}
 \bar{N} = [X,\bar{M}].
\end{equation}
At second order in $\epsilon$, the
$Y$ matrix has to satisfied:
\begin{equation}
 \bar{N} \,X  + X \, \bar{N} = [Y + Y^T,\bar{M}].
\end{equation}

\subsection{Left Perturbation}

Substituting the form the $O$ matrix (Eq. \ref{LPT}) into (Eq.\ref{A1}):
$$
\left[  (1 + \epsilon X) O_0\right]^T \, M \, \left[  (1 + \epsilon X) O_0 \right] = \bar{M}.
$$
After some algebra one gets, at first order in $\epsilon$ parameter, that the
$\bar{X}$ matrix has to satisfied:
\begin{equation}
 \bar{N} = [\bar{X},\bar{M}].
\end{equation}
At second order in $\epsilon$, the
$\bar{Y}$ matrix has to satisfied:
\begin{equation}
 \bar{N} \,\bar{X}  + \bar{X} \, \bar{N} = [\bar{Y} + \bar{Y}^T,\bar{M}].
\end{equation}

\subsection{Left-Right Perturbation}

Substituting the form the $O$ matrix (Eq. \ref{LRPT}) into (Eq.\ref{A1}):
$$
\left[  (1 + \epsilon X) O_0 (1 + \epsilon X)  \right]^T \, M \,
\left[  (1 + \epsilon X) O_0 (1 + \epsilon X)  \right] = \bar{M}.
$$
After some algebra one gets, at first order in $\epsilon$ parameter, that the
$\bar{X}$ matrix has to satisfied:
\begin{equation}
 \bar{N} = [X + \bar{X},\bar{M}].
\end{equation}

%%%%%%%%%%%%%%%%%%%%%%--Bibliography--%%%%%%%%%%%%%%%%%%%%%%%%%%%%%%
%%%%%%%%%%%%%%%%%%%%%%%%%%%%%%%%%%%%%%%%%%%%%%%%%%%%%%%%%%%%%%%%%%%%

\end{document}